\newcommand{\ie}{i\/.\/e\/.,\/~}
\newcommand{\cf}{cf\/.\/~}
\newcommand{\fig}{Fig\/.\/~}
\newcommand{\sect}{Sec\/.\/~}
\newcommand{\algo}{Algorithm~}
\newcommand{\theo}{Theorem~}
\newcommand{\est}[1]{\hat #1}
\newcommand{\vect}[1]{\text{vec}(#1)}
\newcommand*\dif{\mathop{}\!\mathrm{d}}
\newcommand{\epsv}{\epsilon_{\text{val}}}
\newcommand{\epspr}{\epsilon_{\text{pr}}}
\newcommand{\sysparams}{[\bar A \; \bar B]}
\newcommand{\PS}{P_{\bar S}}
\newcommand{\PcS}{\smash{P^c_{\bar S}}}
\newcommand{\stable}{\phi}
\newcommand{\kron}{\otimes}
\newtheorem{assumption}{Assumption}
\title[Probabilistic Robust Linear Quadratic Regulators with Gaussian Processes]{Probabilistic Robust Linear Quadratic Regulators \\ with Gaussian Processes}
\author{%
 \Name{Alexander {von Rohr}}$^{1,3,4}$ \Email{vonrohr@dsme.rwth-aachen.de}\\
 \Name{Matthias {Neumann-Brosig}}$^{2,3}$ \Email{matthias.neumann-brosig@iav.de}\\
 \Name{Sebastian Trimpe}$^{1,4}$ \Email{trimpe@dsme.rwth-aachen.de}\\
 \addr $1$ Institut for Data Science in Mechanical Engineering, RWTH Aachen University, Germany\\
 $2$ Institut Analysis und Algebra, TU Braunschweig, Germany \\
 $3$ Ingenieurgesellschaft Auto und Verkehr (IAV) \\
 $4$ Max Planck Institute for Intelligent Systems, T{\"u}bingen/Stuttgart, Germany%
}
\begin{document}

\maketitle

\begin{abstract}%
Probabilistic models such as Gaussian processes (GPs) are powerful tools to learn unknown dynamical systems from data for subsequent use in control design.
While learning-based control has the potential to yield superior performance in demanding applications, robustness to uncertainty remains an important challenge.
Since Bayesian methods quantify uncertainty of the learning results, it is natural to incorporate these uncertainties into a robust design. 
In contrast to most state-of-the-art approaches that consider worst-case estimates, we leverage the learning method's posterior distribution in the controller synthesis. 
The result is a more informed and, thus, more efficient trade-off between performance and robustness.
We present a novel controller synthesis for linearized GP dynamics that yields robust controllers with respect to a probabilistic stability margin.
The formulation is based on a recently proposed algorithm for linear quadratic control synthesis, which we extend by giving probabilistic robustness guarantees in the form of credibility bounds for the system's stability.
Comparisons to existing methods based on worst-case and certainty-equivalence designs reveal superior performance and robustness properties of the proposed method.
\end{abstract}

\begin{keywords}%
learning-based control, Gaussian processes, probabilistic robust control %
\end{keywords}

\section{Introduction}

The trade-off between performance and robustness is at the heart of any practical control design \citep{boulet2007fundamental}: 
do we seek an optimal controller for a particular system \emph{or} one that works well over a larger set of systems and conditions?
It is possible to quantify this trade-off in the various methods of robust control~\citep{zhou1996robust}.
Robust control, however, is typically based on a worst-case treatment of uncertainty sets and does not consider a distribution over the set of possible model parameters.
Control design based on probabilistic models learned from data is less well established.
Yet, for learned models, the performance-robustness trade-off is especially relevant since uncertainty is inherent in any learning result.

When controllers are deployed on physical machines, practical robustness guarantees are essential to ensure the plant's safe operation.
Dealing with large uncertainties, worst-case controllers are often highly conservative and require additional assumptions to retain their guarantees.
Given the learned probabilistic model, it is sometimes impossible to give deterministic stability guarantees if the set of possible model parameters is too large or even infinite.
By relaxing deterministic guarantees and allowing controllers to fail (with low probability), we can achieve a principled robustness-performance trade-off.
This relaxation and the information given by the distribution allow us to improve the expected control performance while retaining practically useful stability guarantees.

Gaussian processes (GPs) are a state-of-the-art method to learn probabilistic dynamics models that represent uncertainty as a distribution over functions.
Even though the use of GPs to learn dynamics is gaining popularity in recent years, subsequent robust controller synthesis based on probabilistic models is less established.
In their pioneering work in this direction, \cite{berkenkamp2015safe} propose a robust controller synthesis for a linearized GP dynamics model.
Therein a worst-case robust synthesis method is presented where the uncertainty sets are derived from the GP posterior variance.

This paper extends prior work by designing controllers that use a linearized GP model's probabilistic uncertainties.
Instead of using the credible region as the bounds for the set of possible system parameters, we leverage the GP's distribution to design efficient controllers with, in effect, the same robustness guarantees but superior performance.
The proposed formulation results in a Bayesian optimal controller design that uses the GP's probabilistic information.   
In particular, the desired properties are (i) probabilistic robustness (\ie stable with high and predefined probability) and (ii) optimal in expectation with respect to (w.r.t.) the posterior distribution.
In summary, the contributions of this paper are:
\begin{compactitem}
  \item a relaxation of the deterministic robust synthesis for GP error bounds proposed by \citet{berkenkamp2015safe} with reduced conservatism and improved control performance;
  \item a synthesis yielding a probabilistically robust LQR for a linearized GP dynamics model;
  \item probabilistic robustness guarantees for the resulting controller, where the desired confidence on the closed-loop stability can be chosen a-priori and is guaranteed by the algorithm.
\end{compactitem}
The method describes an approximately optimal Bayesian robust stochastic control synthesis.
In numerical experiments, we demonstrate improved control performance compared to the worst-case consideration proposed by \cite{berkenkamp2015safe}.

\subsection{Related work}

While GPs are increasingly popular for learning dynamical systems \citep{nguyen2011model,frigola2014variational,svensson2016computationally, geist2020learning,buissonfenet2020actively} and have recently been used in many branches of reinforcement learning \citep{deisenroth2011pilco,doerr2017optimizing,vinogradska2018numerical} and model predictive control \citep{ostafew2016learning, hewing2019cautious, jain2018learning, koller2018learning}, 
only a few approaches exist for offline controller synthesis with formal stability guarantees.
The approach in \citet{berkenkamp2015safe}, as well as other works such as \citet{umlauft2017feedback}, use probabilistic uncertainty to define error bounds, which are then treated in a worst-case fashion, while the underlying distribution is ignored.
A key difference between these prior works and ours is the formulation of the objective:
We demonstrate herein that leveraging the distribution yields better expected performance while retaining Bayesian stability guarantees similar to the deterministic robust synthesis.
In practice, we gain a significant increase in performance at the cost of a small additional but pre-defined probability of being unstable.

\cite{umlauft2018scenario} propose to use a GP dynamics model to sample scenarios for a differential dynamic program.
Their method is applicable to non-linear systems, using an iterative LQR formulation, and includes probabilistic guarantees on the performance.
In contrast to this approach, our method does not require solving the scenario program online but is limited to a single operating point in its current form.
We derive explicit robustness properties for the linear, probabilistic setting.

Our synthesis method is primarily based on an LMI formulation for the upper bound on the LQR problem in the presence of probabilistic uncertainties proposed by \cite{umenberger2018learning}.
This formulation alleviates some of the conservativeness inherent in alternative worst-case formulations and thereby provides a synthesis for the LQR problem in the presence of probabilistic uncertainty.
While the authors show empirically that the controller is robust w.r.t.\ the specified uncertainty, they do not give theoretical stability guarantees for the presented approach.
We extend this work by a theoretical robustness analysis based on results from scenario optimization.

\section{Problem formulation}\label{sec:problem}

Consider a stochastic, discrete-time and time-invariant dynamical system
\begin{equation}\label{eq:dynamics}
    x_{k+1} = f(x_{k}, u_{k}) + \omega_k,
\end{equation}
and a locally valid linear approximation around an a-priori known operating point $x^* = f(x^*, u^*)$
\begin{equation}\label{eq:linear_dynamics}
    x_{k+1} \approx \bar A x_{k} + \bar B u_{k} + \omega_k,
\end{equation}
where $x_k \in \mathbb{R}^{d_x}$ is the state, $u_k \in \mathbb{R}^{d_u}$ is the input and $\omega_k \sim \mathcal{N}(0,\Sigma^\omega)$ is the process noise of the system. 
For notational convenience, we define the state-action tuple $q_k \, \dot= \, (x_{k}, u_{k})$ as well as the parameters of the linearized system as $\bar S = \sysparams$.
We assume access to the state $x_k$ and the process noise is i.i.d. with $\Sigma_\omega = \textup{diag} (\sigma_{\omega,1}^2,\dots, \sigma_{\omega,d_x}^2)$.

Designing controllers for a-priori known operating points that have a locally valid linear approximation of their dynamics is a problem often considered in practice, as well as in \citet{berkenkamp2015safe} and many industrial control processes.

\subsection{Gaussian process dynamics model}

The dynamics $f$ and the parameters of the linear approximations $\bar A$ and $\bar B$ are unknown, but we can formulate a prior $\hat f$ on \eqref{eq:dynamics} in the form of a GP.
After observing a set $D_N = \{q_{k}, x_{k+1}  \}_{k=0}^N$ of one step transitions (usually as trajectories) of system \eqref{eq:dynamics}, we compute the posterior for the dynamics $f$ as well as the linearized dynamics $\bar A$ and $\bar B$.
We get a Bayesian estimate for the system matrices in \eqref{eq:linear_dynamics} by taking the partial derivatives of the GP $\est{f}$. 
The linearization of the GP dynamics model requires that the covariance function is at least twice differentiable at the operating point.
Since GPs are closed under linear transformations, the resulting Jacobian at the point $q^*$ is distributed according to a matrix normal distribution (\cf \citet[Chapter~9.4]{rasmussen2006gaussian})
$\mathcal{MN}(M_S, U, V)$
where $M_S$ is the mean and $U$ and $V$ are the covariances between rows and columns.%
\footnote{The matrix normal distribution and multivariate normal distribution are related by $\vect{\est{S}} \sim \mathcal{N}(\vect{M_S}, U \otimes V)$, where $\vect{M}$ is the (column-major) vectorization of $M$ and $\kron$ is the Kronecker product.}
We denote the posterior probability distribution over system parameters as $\PS$ and its support as $\Pi$.
For the synthesis problem, we define a truncated matrix normal distribution $\PcS$ with the same mean and covariances, and its support $\Pi_c$ with 
$$
\Pi_c = \{S \; | \; (\vect{S} - \vect{M_S})^{T} (U \otimes V)^{-1} (\vect{S} - \vect{M_S}) \leq \mathcal{X}^2_{d_s, c}\},
$$
where $\mathcal{X}^2_{d_s, c}$ is the c-quantile of the Chi-squared distribution with $d_s = d_x^2 + d_x d_u$ degrees of freedom.
\begin{remark}
Learning a non-linear GP model for \eqref{eq:dynamics} instead of directly estimating a linear model for \eqref{eq:linear_dynamics} using, for example, Bayesian linear regression, can be advantageous since the data $D_N$ is collected from the non-linear system. 
The linear model will be biased if the data is collected outside the local region where \eqref{eq:linear_dynamics} is a good approximation.
Modeling $f$ directly can avoid this bias.
\end{remark}

\subsection{Bayesian optimal linear quadratic regulator}

We want to design a linear and static state-feedback controller $u_k = K x_k$ for \eqref{eq:linear_dynamics} that is optimal, in expectation, w.r.t. the truncated posterior distribution given by the GP model
\begin{subequations}\label{eq:cost_all}
\renewcommand{\theequation}{\theparentequation.\arabic{equation}}
\begin{align}
\min_K \quad J(K)& = \int_{\Pi_c} J(K, S) \PcS(S) \dif S, \label{eq:cost} \\
J(K,S)& = \lim_{T\rightarrow \infty} \frac{1}{T} \sum^T_{k=0} E_{\omega}\left[ x_k^T Q x_k  + u_k^T R u_k\right] \label{eq:cost_state} \\
\text{s.t.} \quad x_{k+1} &= (A + BK) x_{k} + \omega_k, \; S = [A \; B], \nonumber
\end{align}
\end{subequations}
where $Q$ and $R$ are user defined, positive semi-definite and positive definite weight matrices respectively and $x_0 = x^*$.
This formulation is different from the standard LQR problem with known $A$ and $B$ matrices. 
The case where uncertainty about $\bar A$ and $\bar B$ is small and can be ignored, certainty equivalence (CE), is investigated by \citet{mania2019certainty}.
Due to optimization of the \emph{expected cost} w.r.t. the posterior belief the Bayesian optimal formulation may yield a different controller, even if the CE controller can stabilize all systems in $\Pi_c$ (see \sect\ref{sec:results}).

In \eqref{eq:cost_state} we consider the (usual) expectation over the states and time, given a fixed system.
Additionally, we take the expectation over model uncertainty \eqref{eq:cost}.
If \eqref{eq:cost} exists for some $K$, this $K$ will almost surely stabilize any system in $\Pi_c$, since any unstable systems will have infinite expected cost.
Conversely, if \eqref{eq:cost_all} is infeasible, there exists no $K$ that is robustly stable.
\begin{assumption}\label{as:finite}
There exists at least one $K$ such that \eqref{eq:cost} is finite on the support of $\PcS$.
\end{assumption}
Assumption~\ref{as:finite} excludes any distributions containing systems that cannot be stabilized with the same controller.
Distributions with unbounded support, such as the non-truncated normal distribution, are excluded since these will always contain unstable systems for every $K$.
As a practical way of dealing with the unbounded support of a GP, we approximate $\PS$ by a truncated distribution $\PcS$ over a bounded subset given by the credible interval $c$.

Depending on the prior distribution, likelihood, and data, the support of $\PcS$ might contain uncontrollable/unstabilizable systems.
Non-controllable pairs of $A$, $B$ are sets of measure zero w.r.t. the Lebesgue measure \citep[Prop. 3.3.12]{sontag1998mathematical}, meaning a sample of the GP linearization is almost surely controllable.
Assumption~\ref{as:finite} also excludes distributions that concentrate around an uncontrollable system as all of these systems are \emph{hard} to control, \ie the controllability matrices have small singular values, which will lead to exploding costs for every $K$.

\section{Synthesis for linearized Gaussian processes}\label{sec:synthesis}
In this section, we introduce the synthesis algorithm by which we approximately optimize \eqref{eq:cost}, and we prove a probabilistic stability property for the resulting controller.

\subsection{Controller synthesis}

To find a controller that minimizes the expected quadratic cost, it would be necessary to solve the integral in $\eqref{eq:cost}$, for which no analytic solution exists \citep{umenberger2018learning}.
\cite{umenberger2018learning} developed a synthesis algorithm that optimizes a tight and convex upper bound for a Monte Carlo approximation to \eqref{eq:cost_all} by improving an initial common Lyapunov solution using a majorize-minimize algorithm.

Since the chosen GP model yields a point estimate for the process noise $\Sigma^\omega$ we can slightly modify the LMI formulation of \cite{umenberger2018learning},

{
\noindent\begin{minipage}{.5\linewidth}
    \begin{equation}\label{eq:init}
    \begin{aligned}
    &\min_{L, Z\in\mathbb{S}, Y\in\mathbb{S}_{++}} \quad \text{tr} \quad Z \\
    \text{s.t.} \quad &\begin{bmatrix}
        Y & * & * & * \\
        A_i Y + B_i L & Y & * & * \\
        Q^{1/2} Y & 0 & I & * \\
        L & 0 & 0 & R^{-1}
    \end{bmatrix} \succcurlyeq 0 \\
    &\begin{bmatrix}
        Z & {\Sigma^\omega}^{1/2} \\
        {\Sigma^\omega}^{1/2} & Y
    \end{bmatrix}  \succcurlyeq 0 \\
    &\forall \quad S_i = [A_i \; B_i] \in \Pi_c^M, \\
    \end{aligned}
    \end{equation}
\end{minipage}%
\begin{minipage}{.5\linewidth}
    \begin{equation}\label{eq:improv}
    \begin{aligned}
    &\min_{K, {X_i}\in\mathbb{S}_{++}} \quad \text{tr} \quad \frac{1}{M} \sum_i X_i {\Sigma^\omega} \\
    &\text{s.t.} \quad \begin{bmatrix}
        X_i - Q & * & * \\
        A_i + B_i K & T_{\bar X_{i}}(X_i) & * \\
        K & 0 & R^{-1}
    \end{bmatrix} \succcurlyeq 0 \\
    &\forall \quad S_i = [A_i \; B_i] \in \Pi_c^M, \\
    \end{aligned}
    \end{equation}
\end{minipage}
\smallskip
}

\noindent
where $\mathbb{S}$ and $\mathbb{S}_{++}$ denote the set of symmetric or positive definite matrices.
For the initial solution \eqref{eq:init} $X=Y^{-1}$, $K = L X$, and $Z$ is a slack variable.
For the improvement step \eqref{eq:improv} $T_{\bar X_{i}}(X_i)$ denotes the linear approximation of $X_i^{-1}$ around $\bar X_{i}$, the solution to \eqref{eq:improv} for the previous iteration.

Let $K^*$ denote the solution of \eqref{eq:improv} after a stopping criterion has been fulfilled.
We are interested in the probability that $K^*$ will stabilize the true linear system $x_k = (\bar A + \bar B K^*)x_k + w_k$.
As a first step we establish an \emph{a-priori} bound on the probability that the controller result of the initialization step \eqref{eq:init} $K_{\text{init}}=LX$ will be stable w.r.t. to the set of probable systems $\Pi_c$.
We use the probabilistic knowledge about the system parameters to express the system's stability as a Bernoulli distribution.
We define a stability indicator variable for the true linear system \eqref{eq:linear_dynamics} and a fixed $K$ as
\begin{equation}
\stable_K = \begin{cases}
                1, & \text{if}\ \rho(\bar A + \bar B K) < 1  \\
                0, & \text{otherwise.}
            \end{cases}
\end{equation}
Further, we define the probabilistic robustness property as $V_K(\epsilon) = \left( \mathbb{E}[\stable_K] \geq 1 - \epsilon \right)$.
By this definition, we accept a (small) probability $\epsilon$ that the feedback control destabilizes the system.
\setcounter{theorem}{0} 
\begin{theorem}\label{th:scenario}
Let Assumption~\ref{as:finite} hold and let $K_{\text{init}}$ be the solution to \eqref{eq:init}.
Let $\epsilon$ and $\beta >0$ be given and $n_k = 2 d_x^2 + d_x d_u$ the number of optimization variables in \eqref{eq:init}. Let $\PcS$ be the posterior probability over $S = \sysparams$. 
If we draw
$M \geq \left\lceil \frac{2}{\epsilon} \ln (\frac{1}{\beta}) + n_k \right\rceil$
samples according to $\PcS$, then it holds that
\begin{equation}\label{eq:result_synth}
\begin{aligned}
    P( V_{K_{\text{init}}}(\epsilon) ) > 1 - \beta.
\end{aligned}
\end{equation}
\end{theorem}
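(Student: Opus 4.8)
The plan is to recognise the initialisation problem \eqref{eq:init} as a \emph{scenario convex program} in the decision variable $\theta=(L,Z,Y)$ and then invoke the classical scenario feasibility bound. Treating $Y,Z$ as full $d_x\times d_x$ matrices and $L$ as a $d_u\times d_x$ matrix, $\theta$ has $n_k=2d_x^2+d_x d_u$ coordinates; exploiting the symmetry of $Y$ and $Z$ would lower this count, but the scenario sample size is nondecreasing in the number of decision variables, so $n_k$ is conservative yet valid. For a fixed realisation $S=[A\;B]$ the two inequalities in \eqref{eq:init} are LMIs in $\theta$, hence define a closed convex admissible set $\Theta(S)$, while the objective $\text{tr}\,Z$ is linear and sample-independent; thus \eqref{eq:init} is exactly a scenario program over the i.i.d.\ samples $S_1,\dots,S_M\sim\PcS$. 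Feasibility of \eqref{eq:init} — and hence existence of $K_{\text{init}}$ — is part of the hypothesis; I would note that it reduces to common quadratic stabilisability of the sample, which Assumption~\ref{as:finite}, together with boundedness of $\Pi_c$ and the a.s.\ controllability of a GP-linearisation sample, provides when $\PcS$ is concentrated around a controllable nominal. A tie-breaking rule makes the optimiser $\theta^\star=(L^\star,Z^\star,Y^\star)$ measurable, and with it $K_{\text{init}}=L^\star(Y^\star)^{-1}$.

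Next I would apply the scenario feasibility bound: with $M\ge\lceil\frac{2}{\epsilon}\ln\frac{1}{\beta}+n_k\rceil$ samples, with probability at least $1-\beta$ over $S_{1:M}$ the optimiser is $\epsilon$-level feasible, \ie $\PcS(\{S:\theta^\star\notin\Theta(S)\})\le\epsilon$. It then remains to connect LMI feasibility to closed-loop stability. Taking the Schur complement of the first LMI of \eqref{eq:init} with respect to its positive-definite block $\diag{Y^\star,I,R^{-1}}$, congruence-transforming by $X:=(Y^\star)^{-1}\succ0$, and substituting $L^\star=K_{\text{init}}Y^\star$, produces the Lyapunov inequality $X-(A+BK_{\text{init}})^{\top}X(A+BK_{\text{init}})\succeq Q+K_{\text{init}}^{\top}RK_{\text{init}}\succeq0$, which certifies $\rho(A+BK_{\text{init}})<1$, \ie $\stable_{K_{\text{init}}}(S)=1$. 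Hence $\{S:\stable_{K_{\text{init}}}(S)=0\}\subseteq\{S:\theta^\star\notin\Theta(S)\}$, so on the probability-$(1-\beta)$ event above, $\mathbb{E}[\stable_{K_{\text{init}}}]=\PcS(\rho(\bar A+\bar BK_{\text{init}})<1)\ge1-\epsilon$, which is precisely the property $V_{K_{\text{init}}}(\epsilon)$; therefore $P(V_{K_{\text{init}}}(\epsilon))\ge1-\beta$, and the strict inequality in \eqref{eq:result_synth} is recovered from the margin in the ceiling of the sample size.

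The genuine obstacle, I expect, is the well-posedness feeding into the scenario theorem rather than the probabilistic step itself: one must ensure \eqref{eq:init} is feasible with probability one — a \emph{single} Lyapunov matrix $Y^\star$ has to certify all $M$ sampled closed loops simultaneously, so here Assumption~\ref{as:finite} is used in an essential way (it gives a common stabilising gain, and one has to upgrade this to a common \emph{quadratic} certificate using that $\Pi_c$ is bounded and a.s.\ controllable) — and that the selected optimiser is unique and measurable so that the bound applies verbatim. Two lighter points also need care: making explicit that the true system $\bar S$ and the samples share the posterior $\PcS$, so the scenario violation probability indeed equals $1-\mathbb{E}[\stable_{K_{\text{init}}}]$ up to the favourable slack from non-tight Lyapunov certificates; and checking the strict inequality $\rho<1$ in the Schur-complement step, which follows from $Y^\star\succ0$ and $R\succ0$ (or, more generally, a detectability condition on $(Q^{1/2},A+BK_{\text{init}})$). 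By contrast, the dimension count and the Schur-complement algebra are routine.
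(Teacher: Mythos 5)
Your argument is exactly the paper's: the published proof consists of the single observation that \eqref{eq:init} is a convex scenario program to which \citet[Theorem 1]{campi2009scenario} applies, and everything you add — the decision-variable count $n_k$, the Schur-complement/congruence step turning feasibility of the first LMI into the Lyapunov certificate for $\rho(A+BK_{\text{init}})<1$, the inclusion of the instability set in the constraint-violation set, and the caveats about a.s.\ feasibility, measurable tie-breaking, and strictness of the resulting Lyapunov inequality — is the detail the paper leaves implicit. The proposal is correct and, if anything, more careful than the original on the well-posedness points it flags.
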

\begin{proof}
The optimization problem \eqref{eq:init} is a convex scenario program and therefore we can apply \citet[Theorem 1]{campi2009scenario}.
\end{proof}
Theorem~\ref{th:scenario} states that, according to our current (truncated) belief over the system parameters, if we sample $M$ scenarios from $\PcS$, we can be \emph{confident} that with probability of at least $1-\beta$ the desired robustness property $V_{K_{\text{init}}}$ is fullfilled.
Part of the robustness property is a \emph{risk} parameter that describes the upper bound for the probability that $K_{\text{init}}$ will destabilize the system.

Since $K_{\text{init}}$ is highly conservative due to the common Lyapunov approximation, we improve the controller by iteratively solving \eqref{eq:improv}.
Due to the averaging over all solutions, \eqref{eq:improv} is not a scenario program and Theorem~\ref{th:scenario} does not hold for the improvements.
We therefore validate the final controller in an a-posteriori analysis which we describe in the next subsection.

\subsection{Controller validation}

After iteratively solving \eqref{eq:improv} for a particular set of samples, we validate the success of synthesis, \ie that the controller will stabilize \eqref{eq:linear_dynamics} with high probability, \emph{before} deploying the controller to the system.
The validation ensures that the robustness properties that provably hold for the initialization \eqref{eq:init} will also hold after the iterative improvement.
At the same time we can improve our initial guarantees regarding the probability of a successful synthesis $P( V_{K^*}(\epsilon))$. 

The validation is done on the learned distribution but for samples not seen by the synthesis method.
For the validation procedure, we approximate the expectation $\mathbb{E}[\stable_{K^*}]$ by sampling systems from $\PcS$.
The sample-based approach avoids conservativeness in our analysis at the cost of some approximation error.
Since sampling and checking the spectral radius are computationally relatively cheap, we can use a lot of samples and make this approximation error small enough for most practical purposes.
Sample-based robust analysis is a simple method that does not rely on a specific uncertainty structure (\cf \citet{ray1993monte}). 
\begin{theorem}\label{th:validation}
Let Assumption~\ref{as:finite} hold. 
If we draw $M_{\text{val}} \geq \frac{1}{2 \epsv} \log(\frac{1}{\alpha})$ samples from $\PcS$, \algo\ref{alg:synth} will return a probabilistically stable controller $K^*$ where
\begin{equation}\label{eq:result_val}
\begin{aligned}
    P\left( \mathbb{E}[\stable_{K^*}] \geq 1 - \epspr) \right) > 1 - \alpha,
\end{aligned}
\end{equation}
with $\epspr =  c - (\epsilon + \epsv)$ and according to the posterior distribution $\PS$.
\end{theorem}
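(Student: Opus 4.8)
The plan is to establish \eqref{eq:result_val} by combining a one-sided concentration bound on the empirical stability rate computed in the validation step with a change of measure from the truncated law $\PcS$ to the full posterior $\PS$. Recall the relevant part of \algo\ref{alg:synth}: after the iterative improvement it draws $M_{\text{val}}$ fresh samples $S_1,\dots,S_{M_{\text{val}}}$ from $\PcS$, independently of the samples used in \eqref{eq:init}--\eqref{eq:improv}; it evaluates the stability indicator $\stable_{K^*}$ at each $S_j=[A_j\;B_j]$ (\ie it checks $\rho(A_j+B_jK^*)<1$); and it returns $K^*$ only if the empirical stability rate $\hat\mu=\frac{1}{M_{\text{val}}}\sum_j\stable_{K^*}(S_j)$ is at least $1-\epsilon$, the same risk level that Theorem~\ref{th:scenario} certifies for $K_{\text{init}}$, and otherwise declares failure. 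Assumption~\ref{as:finite} keeps the synthesis feasible, so a candidate $K^*$ is available whenever the algorithm runs.

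First I would bound the estimation error of $\hat\mu$. The crucial observation is that $K^*$ depends on the synthesis samples only and is therefore independent of $S_1,\dots,S_{M_{\text{val}}}$; hence, conditionally on $K^*=K$, the variables $\stable_K(S_j)$ are i.i.d.\ Bernoulli with mean $p^c_K:=\mathbb{E}_{\PcS}[\stable_K]$, and a one-sided Hoeffding bound gives $P(\hat\mu>p^c_K+\epsv)\le e^{-2M_{\text{val}}\epsv^2}$ uniformly in $K$, hence also for the data-dependent $K^*$. For $M_{\text{val}}$ of the size prescribed in the theorem (up to the precise constant, see below) this probability is at most $\alpha$. Consequently, on an event of probability at least $1-\alpha$ either the algorithm declares failure, or it returns $K^*$, in which case the acceptance rule $\hat\mu\ge1-\epsilon$ forces $p^c_{K^*}\ge\hat\mu-\epsv\ge1-\epsilon-\epsv$.

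Next I would transfer this bound from $\PcS$ to $\PS$. By construction $\PcS$ is $\PS$ conditioned on the event $\{S\in\Pi_c\}$, and since $\Pi_c$ is exactly the $\mathcal{X}^2_{d_s,c}$-sublevel set of the Gaussian Mahalanobis distance, $\PS(\Pi_c)=c$. Hence for any fixed controller $K$, and in particular for $K^*$,
\[
\mathbb{E}_{\PS}[\stable_K]=c\,\mathbb{E}_{\PcS}[\stable_K]+\int_{\Pi\setminus\Pi_c}\stable_K\,\dif\PS\ \ge\ c\,p^c_K ,
\]
using $\stable_K\ge0$. Combining with the previous step, on an event of probability at least $1-\alpha$ the returned controller satisfies $\mathbb{E}_{\PS}[\stable_{K^*}]\ge c\,(1-\epsilon-\epsv)\ge c-(\epsilon+\epsv)$, which is the probabilistic stability certificate \eqref{eq:result_val} with effective risk $\epspr=(1-c)+\epsilon+\epsv$ --- the sum of the truncation mass $1-c$, the synthesis risk level $\epsilon$, and the validation slack $\epsv$.

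The hard part is not any single estimate but the bookkeeping of the probability space: one has to use that the validation samples are drawn after and independently of the synthesis, so that Hoeffding's inequality legitimately applies to the \emph{data-dependent} $K^*$, and one has to state the conclusion conditionally on the algorithm returning a controller rather than declaring failure. The remaining ingredient $\PS(\Pi_c)=c$ is immediate once one notes that $(\vect{\est{S}}-\vect{M_S})\tran(U\kron V)^{-1}(\vect{\est{S}}-\vect{M_S})\sim\mathcal{X}^2_{d_s}$ for $\est{S}\sim\PS$. Finally, one should double-check the sample-size constant: the estimate $e^{-2M_{\text{val}}\epsv^2}$ falls below $\alpha$ once $M_{\text{val}}\ge\frac{1}{2\epsv^2}\log\frac1\alpha$, \ie with a $1/\epsv^2$ rather than a $1/\epsv$ dependence on the validation tolerance.
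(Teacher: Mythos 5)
Your proposal is correct and follows essentially the same route as the paper's own proof: a one-sided Hoeffding bound on the empirical stability rate over the fresh validation samples, combined with the acceptance threshold $\est{\stable}\geq 1-\epsilon$ to conclude $\mathbb{E}_{\PcS}[\stable_{K^*}]\geq 1-\epsilon-\epsv$ with confidence $1-\alpha$, followed by the truncation correction that discounts the mass $1-c$ outside $\Pi_c$ to pass from $\PcS$ to $\PS$. You argue two points more carefully than the paper does: the independence of $K^*$ from the validation samples (which is what legitimizes applying Hoeffding to a data-dependent controller), and the explicit inequality $\mathbb{E}_{\PS}[\stable_K]\geq c\,\mathbb{E}_{\PcS}[\stable_K]$. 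Your observation about the sample size is also correct: the bound $\exp(-2M_{\text{val}}\epsv^2)\leq\alpha$ requires $M_{\text{val}}\geq\frac{1}{2\epsv^2}\log(\frac{1}{\alpha})$, so the $\frac{1}{2\epsv}$ appearing in the theorem statement (and in \algo\ref{alg:synth}) should indeed be $\frac{1}{2\epsv^2}$.
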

\begin{proof}
\algo\ref{alg:synth} returns a controller only if the empirical expected value of $\stable$, $\est{\stable}$ is at least $1-\epsilon$.
We can directly apply the one sided Hoeffding inequality $P(\est{\stable} - \mathbb{E}[\stable] \geq \epsv) \leq \exp{(-2 M_{\text{val}} \epsv^2)}$.
By choosing $M_{\text{val}} \geq \frac{1}{2 \epsv} \log(\frac{1}{\alpha})$ it follows that $\exp{(-2 M_{\text{val}} \epsv^2)} \leq \alpha$ and $\est{\stable} \geq 1-\epsilon$, thus
\begin{equation}
\begin{aligned}
P(1 - \epsilon - \mathbb{E}[\stable] \geq \epsv) \leq \alpha
\Rightarrow P(\mathbb{E}[\stable] \leq 1 - \epsilon - \epsv) \leq \alpha. \\ 
\end{aligned}
\end{equation}
Since we sample scenarios from the truncated distribution $\PcS$ as opposed to $\PS$, we do not account for systems in a set with probability mass $1-c$.
It follows that $P\left( \mathbb{E}[\stable] \leq \epspr \right) \leq \alpha$.
\end{proof}
If Assumption~\ref{as:finite} does not hold, \algo\ref{alg:synth} will return no controller with high probability, giving a certificate of infeasibility for the problem approximated by the scenario program.
In this case, we can either improve the prior distribution by adding additional knowledge or collect more data.
The choice of $\beta$ in \algo\ref{alg:synth} does not influence the a-posteriori stability, but it determines how often we need to run the synthesis on a new sample set before we find a controller that satisfies $\est{\stable} \geq 1-\epsilon$.
Consequently choosing a high $\beta$ will increase the expected runtime of \algo\ref{alg:synth}.

%

When compared to the synthesis by \cite{berkenkamp2015safe} the problem is relaxed since we allow a (small) probability $\alpha$ that the controller does not robustly stabilize the credible region of the model. 
Otherwise, we can achieve the same stability guarantees by setting $\epspr = 95\%$, the credible region chosen therein.
Keeping the similar robustness properties in mind, we now compare both approaches in the next section.

\begin{algorithm}
	\caption{Probabilistic robust LQR synthesis}\label{alg:synth}
	\begin{algorithmic}[1]
		\State \textbf{Input:} Posterior distribution $\PcS$ of $\est{S}$, risk parameter $\epsilon$ and $\epsv$, confidence parameter $\beta$ and $\alpha$, cost matrices $Q$ and $R$.
          \Do
			\State Generate at least $M = \left\lceil \frac{2}{\epsilon} \ln (\frac{1}{\beta}) + n_k \right\rceil$ samples from $\PcS$ (Theorem~\ref{th:scenario}).
    		\State Find $K^*$ by solving first \eqref{eq:init} and than \eqref{eq:improv} iteratively.
    		\State \textbf{if} {\eqref{eq:init} is infeasible} \Return
		    \State Draw at least $M_{\text{val}} = \left\lceil \frac{1}{2 \epsv} \log(\frac{1}{\alpha}) \right\rceil$ samples from $\PcS$. (Theorem~\ref{th:validation})
		    \State $
		    \est{\stable_{K^*}} = \frac{1}{M_{\text{val}}} \sum_{i=0}^{M_{\text{val}}}
		    \begin{cases}
                1, & \text{if}\ \rho(A_i + B_i K^*) < 1  \\
                0, & \text{otherwise}
            \end{cases}
            $
          \doWhile{$\est{\stable_{K^*}} < 1 - \epsilon$}
		\State\Return $K^*$.
	\end{algorithmic}
\end{algorithm}

\section{Empirical results}\label{sec:results}

In this section, we apply the results from \sect\ref{sec:synthesis} to a synthetic benchmark problem to illustrate and empirically verify the performance of the proposed synthesis and its robustness properties.
We compare performance and robustness of the proposed \emph{probabilistic robust} (PR) synthesis described in \algo\ref{alg:synth} with the \emph{robust} (R) method to solve the LQR problem based on linearized GP models in \citet{berkenkamp2015safe}.
Additionally, we compare against the \emph{certainty equivalence} (CE) setting in which the mean of $\PS$ is used to solve the standard Riccati equations. 

In the second subsection, we use GPs to learn a given system from data and synthesize a controller based on the distribution over linearized systems.
We numerically compare the stability and performance properties of the discussed methods.
We want to remark that the primary goal here is not to evaluate how well GPs learn the given dynamics.
Instead, our focus is to compare how different approaches use the given uncertainty and how the resulting controller performs. 

We collect the confidence and risk parameters for Algorithm~\ref{alg:synth} in the tuple $\psi = (c, \epsilon, \beta, \epsv, \alpha)$. 
The source code for all experiments, including all parameters, is available online\footnote{\url{https://github.com/Data-Science-in-Mechanical-Engineering/prlqr}}.

\subsection{Empirical results on a synthetic distribution}\label{ssec:results_dist}

As a benchmark problem, we create a synthetic distribution $\PS$ as $\mathcal{MN}([\bar A \; \bar B], U, V)$ that highlights the benefits of the proposed method.
As the covariance matrices we choose a sample $E = U \kron V$ from a Wishart distribution with $(d_x^2 + d_x d_u)$ degrees of freedom and a scale matrix $\sigma^2  (0.5 \cdot  I_{18} + 0.5 \cdot  1_{18})$, where $1_{18}$ denotes a $18\times18$ all-ones matrix.
This allows for correlations in the uncertainty between system parameters.
By changing $\sigma^2$ we can compare the synthesis methods over multiple levels of uncertainty.

As the mean for the synthetic distribution we choose the LQR benchmark problem proposed by \cite{dean2020sample}
\begin{equation}\label{eq:dean_dynamics}
\bar A =
\begin{bmatrix}
    1.01 & 0.01 & 0\\
    0.01 & 1.01 & 0.01\\
    0 & 0.01 & 1.01
\end{bmatrix} \;
\bar B = I_3 \; Q = 10^{-3}I_3 \; R=I_3 \; \Sigma^\omega=10^{-3}I_3,
\end{equation}
where $I_3$ denotes the 3-dimensional identity matrix.
The credible region for the robust synthesis is chosen as $95\%$ and the parameters for the proposed method are set correspondingly to $\psi = (0.98, 0.02, 0.20, 0.01, 0.001)$.
While the distribution of risk over the parameters $\epsilon_{\text{pr}} = c - (\epsilon + \epsv)$ is arbitrary, it does influence the numbers of samples for the scenario program as well as the validation.
In practice, we set the parameters via trial and error, such that the run-time of the algorithm is reasonable.
Stability and performance are evaluated on a sample of \num{10.000} systems drawn from the \emph{non-truncated} matrix normal distribution.
We calculate the expected cost for each sampled system over a finite horizon of $200$ (\cf \citet[Eq\/.\/~20]{schluter2020event}).
We report the cost for all sampled and stable closed-loop systems as well as the frequency of unstable systems in \fig\ref{fig:synth_results}.

The empirical results clearly show a reduced level of conservativeness in the problem formulation when comparing the proposed PR synthesis to the R synthesis.
The PR synthesis problem is feasible with higher levels of uncertainty in the model and, additionally, the \emph{expected cost} of the resulting controller is lower for all levels of uncertainty.
The trade-off is that the PR synthesis yielding an unstable controller for up to $0.3\%$ of samples of the \emph{non-truncated} distribution.
As to be expected, having the different optimization goals in mind, the worst case costs for the controller of the PR synthesis is higher, but only for high levels of uncertainty and in rare cases.
Similar empirical results for the PR synthesis have been shown previously by \cite{umenberger2018learning}, albeit without sample requirements for probabilistic robustness guarantees.
In contrast to the previous results we increased the sample size to $M \geq 188$ according to \theo\ref{th:scenario} from $M=50$ used by \cite{umenberger2018learning}.


Compared to the robust synthesis methods, the CE synthesis yields a controller that stabilizes at least $95\%$ of the samples for very low uncertainties in the system parameters. 
It can, therefore, not be used in settings with significant uncertainties where we require probabilistic performance guarantees.
Even at low model uncertainty, the cost's variance of the CE controller remains high, which might be undesirable in practice. 
Starting from a $\sigma^2$ of $1\text{e-}4$, the controller resulting from the PR synthesis is able to outperform the CE controller in expectation.
This suggests that robustness properties do not need to come at the cost of performance in the considered setting with probabilistic uncertainties.
To summarize the findings of \fig\ref{fig:synth_results}: the proposed \algo\ref{alg:synth} outperforms the robust controller in expectation, has higher worst-case costs only in very unlikely cases, and can find controllers with significantly higher model uncertainty.

\newcommand{\figSynth}[1]{
\begin{figure*}[#1]
\centering
\includegraphics[width=\textwidth]{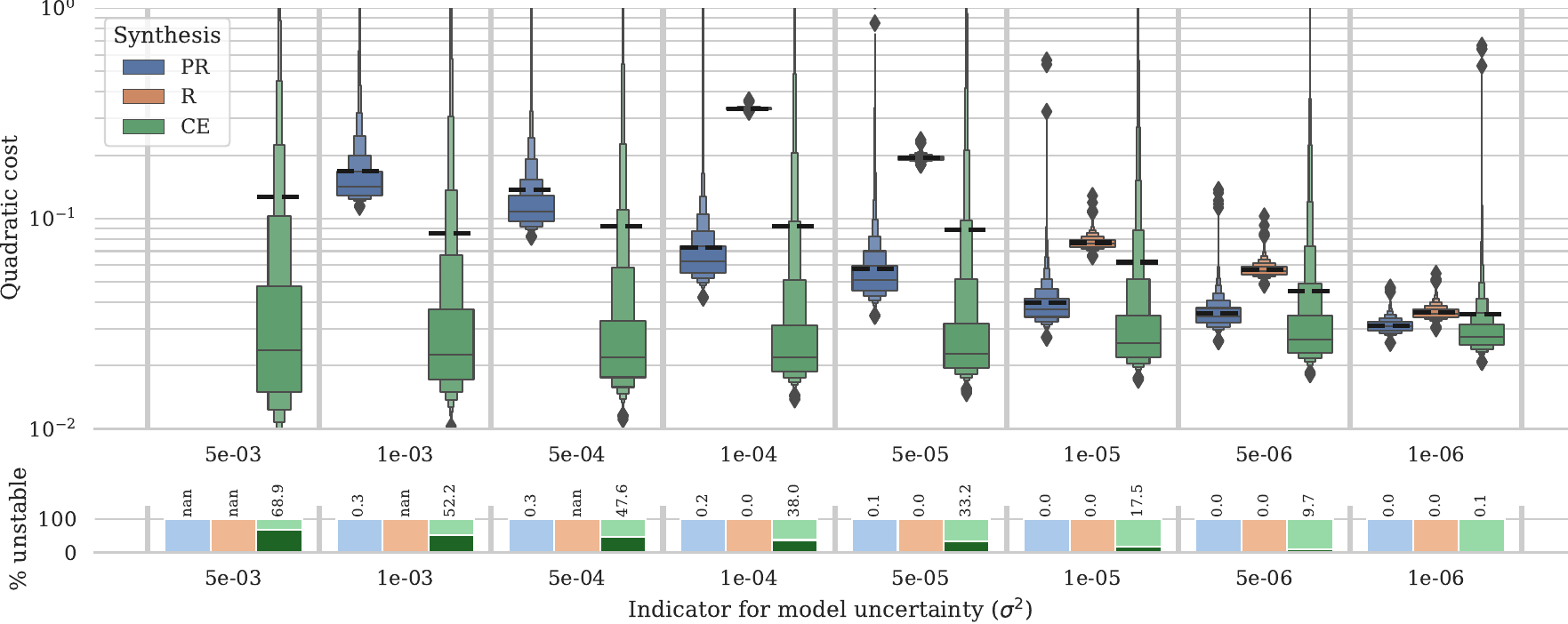}
\caption{Synthetic example (Sec.~\ref{ssec:results_dist}).  Distribution over LQR costs (top)  and frequency of unstable closed-loop systems (bottom) for a model with different levels of uncertainties based on \num{10.000} samples.
A missing boxplot indicates the problem was infeasible for the given uncertainty.
The empirical mean of each distribution is shown as a black dashed line.
}
\label{fig:synth_results}
\end{figure*}
\vspace{-2.5mm}
}
\figSynth{htp}

\subsection{Synthesis using GP models trained in simulation}\label{ssec:results_gp}

In this subsection, we apply our framework to stabilize two non-linear systems, namely the \emph{Furuta} or \emph{rotary pendulum} and a \emph{synthetic} system inspired by the problem proposed in \cite{dean2020sample}.
As a comparison baseline, we show the controller based on the linearization of the true system (T).
We add this comparison to show how the learning-based methods perform versus \emph{perfect model knowledge}.
The parameters $\psi$ for the PR synthesis are identical to the previous subsection with $\psi = (0.98, 0.02, 0.20, 0.01, 0.001)$ and the credible region for the robust method is again set to $\num{95}\%$.
We use a standard GP with a squared exponential kernel and fixed hyperparameters for all our experiments.
A dataset $D_N$ for the GP model is generated by starting each system close to the operating point and sampling random control inputs from a normal distribution.
The system is reset after $n$ samples, and $N_t$ rollouts are recorded, giving a total sample size of~${N=n N_t}$.
We evaluate the \emph{performance} by applying the controller to the true non-linear system, initialized in the operating point, and record the cost for \num{200} time steps. 
We repeat the performance measurement \num{200} times and report the mean of these runs as the cost for the given state feedback controller.
Each experiment, \ie data collection, GP regression, and controller synthesis, is repeated with a different random seed 25 times, and the results for each experiment are plotted.

\subsubsection{Synthetic system}

The locally linear synthetic system is a variant of the linear system proposed as a challenging LQR benchmark by \cite{dean2020sample}
\begin{equation}\label{eq:dean_dynamics_non_lin}
    x_{k+1} =
    \begin{bmatrix}
        1.01 & 0.01 & 0\\
        0.01 & 1.01 & 0.01\\
        0 & 0.01 & 1.01
    \end{bmatrix}
    x_{k}
    +
    \left(  
    \begin{bmatrix}
        0.3 & 0.0 & 0.0 \\
        0.3 & 0.3 & 0.0 \\
        0.3 & 0.3 & 0.3
    \end{bmatrix}
    x_{k}\right)^{\circ 3}
     + I u_{k} + \omega_k,
\end{equation}
were $\circ$ denotes the Hadamard power.
We set $Q=I_3$ and $R=I_3$.
One rollout for this system consists of six samples.
Due to the added non-linearity, a linear estimator might fail to identify the system.
We evaluate the stability by checking the spectral radius w.r.t. the linear part of \eqref{eq:dean_dynamics_non_lin} and check that all states are smaller than \num{1000}.
The results are shown in \fig\ref{fig:result_synth}. 
The PR synthesis outperforms the R synthesis in expectation and is feasible more often at 3 and 5 rollouts. 
All feasible cases lead to a stable controller.
In this experiment, the CE controller performs better than the robust methods, suggesting that our GP model might overestimate the uncertainty.
For 8 rollouts the PR and CE controller are on par with a controller designed for the linearization of the true non-linear system (T).

\subsubsection{Rotary Pendulum}
The rotary pendulum dynamics are based on the equations of motion derived by \cite{iwashiro1996energy} and discretized with a sample time of $20 \, \text{ms}$.
The state-space of the rotary pendulum is \mbox{4-dimensional} and consists of the angle and angular velocity of the rotary arm and the pendulum. The operating point is the unstable upright position.
We collected \num{30} samples per rollout.
Since our contribution is not focused on model learning, we reduced the difficulty of the learning task.
The data is recorded around the operating point using an a-priori stabilizing controller, which allowed us to have a relatively accurate GP posterior with comparatively little data.

The results are shown in \fig\ref{fig:result_furuta}.
The PR synthesis yields a controller that performs better, on average, than both the controllers from the CE and R synthesis.
In this setting, the PR controller performs better than the CE controller, which we attribute to the more informative data-set and a better tuned GP model, in contrast to the synthetic system.

We want to remark that if we use a Bayesian uncertainty estimate, the performance critically depends on the chosen prior, especially in the low data regime.
However, the robustness properties hold as long as the prior contains the true system with sufficient probability.
In principle, the same results apply to uncertainties obtained from frequentist methods.

\begin{figure}[htp]
\floatconts
  {fig:gp_results}
  {\caption{Synthesis on GP model learned from data \sect\ref{ssec:results_gp}. Distribution over (stable) LQR costs (top) on the true system. We plot the frequency of feasibility for each method (bottom).
  All feasible problems resulted in stable controllers.
  We conducted \num{25} experiments for each number of rollouts.}}
  {%
    \subfigure[Synthetic system]{\label{fig:result_synth}%
      \includegraphics[width=0.47\textwidth]{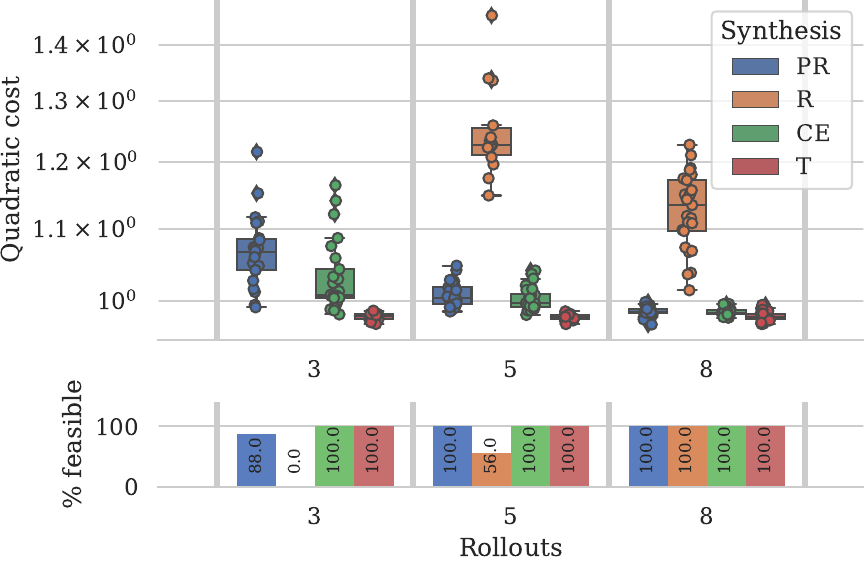}}%
    \qquad
    \subfigure[Rotary pendulum simulation]{\label{fig:result_furuta}%
      \includegraphics[width=0.47\textwidth]{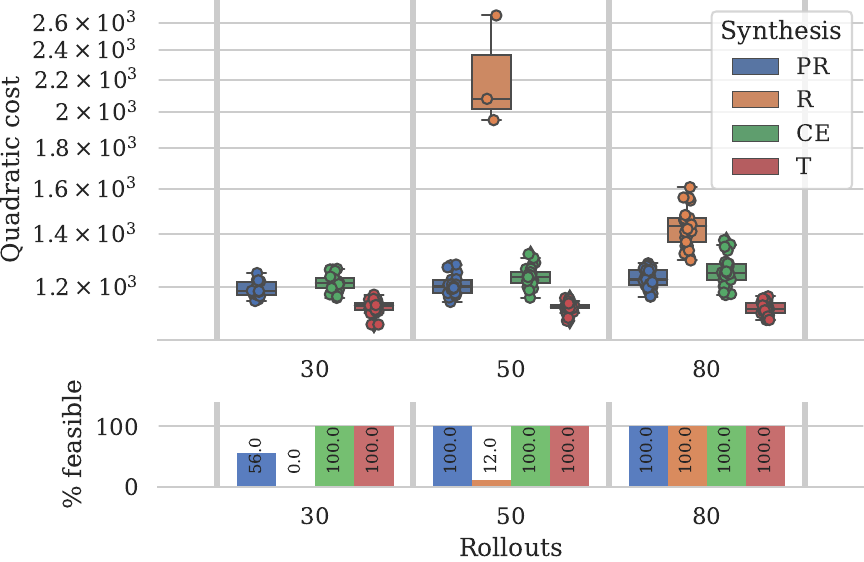}}%
  }
  \vspace{-2.5mm}
\end{figure}

\section{Conclusion}

We introduce a novel probabilistic robust control method for linearized GP models that leverages posterior distributions over model parameters.
Instead of optimizing for the worst-case, we optimize the controller for the specific distribution. 
Thus, we achieve significantly better performance while still ensuring comparable robustness guarantees.

From a theoretical point of view, our method only requires a posterior distribution over potential system parameters.
Here, we obtain these by linearizing a GP, a model class with a lot of recent attention for dynamics modeling.
While this is already relevant for many applications, we want to address more general problems, also beyond set-point control, in future work.
Additional research is required to use the framework of probabilistic controller synthesis for partial state observations, where one would like to design an observer alongside the controller, both based on data.

\newpage
\acks{The authors thank F. Solowjow, C. Fiedler, A. R. Geist and S. Heim for their helpful comments and discussions.
This work was supported in part by the Cyber Valley Initiative and the Max Planck Society.
The authors thank the International Max Planck Research School for Intelligent Systems for supporting A.~von~Rohr.
}

\bibliography{prlqr}

\end{document}